\newtheorem{lemma}{Lemma}
\newtheorem{theorem}{Theorem}
\newtheorem{corollary}{Corollary}
\newtheorem{proposition}{Proposition}
\newtheorem{assumption}{Assumption}
\begin{document}
\title{Welfare Analysis via Marginal Treatment Effects
}
\author{Yuya Sasaki\thanks{Y. Sasaki: Department of Economics, Vanderbilt University, VU Station B \#351819, 2301 Vanderbilt Place, Nashville, TN 37235-1819. Email: yuya.sasaki@vanderbilt.edu}\\ Department of Economics\\ Vanderbilt University
\and 
Takuya Ura\thanks{T. Ura: Department of Economics, University of California, Davis, One Shields Avenue, Davis, CA 95616. Email: takura@ucdavis.edu}\\ Department of Economics\\ University of California, Davis}
\date{} 
\maketitle 
\begin{abstract}
Consider a causal structure with endogeneity (i.e., unobserved confoundedness) in empirical data, where an instrumental variable is available. In this setting, we show that the mean social welfare function can be identified and represented via the marginal treatment effect \citep[MTE,][]{bjorklund/moffitt:1987} as the operator kernel. This representation result can be applied to a variety of statistical decision rules for treatment choice, including plug-in rules, Bayes rules, and empirical welfare maximization (EWM) rules as in \citet[Section 2.3]{hirano2020}. Focusing on the application to the EWM framework of \cite{kitagawa2018should}, we provide convergence rates of the worst case average welfare loss (regret) in the spirit of \citet{manski:2004}. 
\begin{description}
\item {\bf Keywords:} empirical welfare maximization (EWM), endogeneity, heterogeneity, marginal treatment effects (MTE), statistical decision rules, treatment choice
\item {\bf JEL Codes:} C14, C21
\end{description}
\end{abstract}

\section{Introduction}

One of the most important goals of empirical economic research is to advise policy makers on how to assign heterogeneous individuals to a treatment under consideration subject to budgetary, legal, and ethical constraints, based on evidence from empirical data.
To this goal, it is crucial to identify a social welfare function in observational data settings.
For many observational data sets used by empirical researchers, treatments are likely to be endogenously selected by rational individuals, rather than randomly assigned.
Furthermore, the effects of these treatments are often heterogeneous across individuals, even after controlling for their observable attributes.
In this light, we propose a novel method of identifying the mean social welfare function in the presence of unobserved heterogeneity in treatment effects while accounting for endogenous treatment selection in empirical data.

It is well known today that the marginal treatment effects \citep[MTE,][]{bjorklund/moffitt:1987} measure heterogeneous treatment effects, and the MTE can be identified with an instrumental variable under  endogenous treatment selection \citep[][]{heckman/vytlacil:2001,heckman/vytlacil:2005,heckman/vytlacil:2007}.
Hence, it is a natural idea to use the MTE as a building block for the identification of a social welfare function in the presence of unobserved heterogeneity and endogeneity.
In this paper, we show that the mean social welfare function can be indeed identified and represented via the MTE as the operator kernel.
Since the identification and estimation of the MTE have been well established in the existing literature \citep[e.g.,][]{heckman/vytlacil:2001,heckman/vytlacil:2005,heckman/vytlacil:2007,carneiro/lee:2009,carneiro/heckman/vytlacil:2010}, our result thus paves the way for these existing theories and methods of MTE to be directly applied to welfare analysis.

Once the mean social welfare function has been identified via the MTE, we can apply it to a variety of policy makers' statistical decision problems of treatment choice, including those based on plug-in rules, Bayes rules, and empirical welfare maximization (EWM) rules \citep[see][Section 2.3]{hirano2020}.
Focusing on the EWM rules in particular, we can take advantage of the technology developed by \cite{kitagawa2018should} to analyze properties of the EWM method in the spirit of \citet{manski:2004}.
Specifically, under both the heterogeneity and endogeneity, we can derive convergence rates of the worst case average welfare loss (regret) from the maximum empirical welfare.
As such, our result contributes to the literature by extending the scope of applicability of the EWM framework of \cite{kitagawa2018should}, that is originally based on the assumption of selection on observables or unconfoundedness, to the framework that now allows for unobserved confoundedness or endogeneity.

A recent paper by \cite{athey/wager:2020} considers an endogeneity problem in the context of the EWM, and proposes a theory based on doubly robust estimators of average treatment effects.
The result that we propose in this paper neither nests nor is nested by that of \cite{athey/wager:2020} -- these two papers play rather complementary roles.
On the one hand, \citet[Eq. (16)]{athey/wager:2020} assume homogeneous treatment effects, which implies a constant MTE, while our framework can allow for unobserved heterogeneity in treatment effects.
On the other hand, the applicability of our proposed method hinges on the identification of the MTE, while \citet{athey/wager:2020} do not need to identify the MTE for their objective.
In other words, our framework accommodates unobserved heterogeneity at the expense of assuming that the MTE is identified.
This tradeoff illustrates the complementarity between our result and the result developed by \citet{athey/wager:2020}.

Another recent paper by \cite{byambadalai:2020} also considers an endogeneity problem in the context of counterfactual welfare comparisons, and is thus closely related to this paper.
These two papers again play complementary roles.
On the one hand, \cite{byambadalai:2020} develops the partial identification and focuses on welfare gains and losses as parameters of interest, while we develop the point identification of the welfare function that is applicable to various statistical decision rules as well as welfare comparisons.
On the other hand, \cite{byambadalai:2020} imposes weak assumptions and in particular does not need to assume to identify the MTE.
This tradeoff illustrates the complementarity between our result and the result developed by \cite{byambadalai:2020}.


{\bf Relation to the Literature:}
This paper aims to contribute to the literature on statistical decisions in econometrics -- see the recent survey by \citet{hirano2020} for a comprehensive review of this subject.
In particular, we focus on an application of our representation theorem to bounding the worst case average welfare loss in the spirit of \cite{manski:2004} with the recent technology developed by \citet{kitagawa2018should}, as mentioned above.
Besides, this paper is also related to the broad literature on policy choices and welfare analysis including
\cite{manski:2004},
\cite{manski:2004,manski:2009},
\cite{dehejia2005program},
\cite{schlag2007eleven},
\cite{hirano2009asymptotics,hirano2020},
\cite{stoye2009minimax,stoye2012minimax},
\cite{chamberlain2011bayesian},
\cite{bhattacharya2012inferring}, 
 \cite{tetenov2012statistical},
\cite{armstrong2015inference},
\cite{kasy:2016},
\cite{mbakop2016model},
 \cite{kock2017optimal},
 \cite{kitagawa2018should,kitagawa2019equality},
 \cite{rai2018statistical},
\cite{sakaguchi:2019}
 \cite{viviano2019policy},
\cite{athey/wager:2020},
\cite{byambadalai:2020}, 
\cite{han2020comment}, and 
\cite{sun:2020}.
In particular, our result complements those of \cite{athey/wager:2020} and \cite{byambadalai:2020}, as mentioned above.
Also closely related is the literature on the MTE \citep{bjorklund/moffitt:1987} and its identification and estimation, including
\citet{heckman/vytlacil:2001,heckman/vytlacil:2005,heckman/vytlacil:2007},
\citet{carneiro/lee:2009},
\citet{carneiro/heckman/vytlacil:2010},
\cite{brinch/mogstad/wiswall:2017},
\citet{lee2018identifying}, and
\cite{mogstad/santos/torgovitsky:2017}
among many others.
The applicability of our representation theorem relies on the identification of the MTE from this literature.
Finally, this paper also complements the literature on policy relevant treatment effects \citep[e.g.,][]{heckman/vytlacil:2001,heckman/vytlacil:2005,heckman/vytlacil:2007,brinch/mogstad/wiswall:2017,carneiro/lokshin/umapathi:2017,mogstad/santos/torgovitsky:2017,sasaki2018estimation} which develops methods of identification, estimation and inference for average welfare gains under counterfactual policies based on the MTE.


{\bf Organization:}
The rest of this paper is organized as follows.
Section \ref{sec:model} introduces the model.
Section \ref{sec:main_result} presents the main result of representing the mean social welfare via the marginal treatment effects.
Section \ref{sec:application_to_sdr} introduces applications to three statistical decision rules.
Section \ref{sec:application_to_ewm} demonstrates the use of the representation result in the empirical welfare analysis. 
Section \ref{sec:concl} concludes.
The appendix contains mathematical proofs.

\section{The model} \label{sec:model}
Consider the model
\begingroup
\allowdisplaybreaks
\begin{align}
Y&=DY_1+(1-D)Y_0
\label{eq:y}
\\
D&=1\{\tilde{\nu}(Z)-\tilde{U}\geq 0\},
\label{eq:d}
\end{align}
\endgroup
where 
$Y$ denotes an observed outcome variable, 
$D$ denotes an observed binary treatment variable,
$Z$ denotes a vector of observed exogenous variable,
$Y_0$ and $Y_1$ denote unobserved potential outcomes under no treatment and under treatment, respectively, and
$\tilde{U}$ denotes an unobserved factor of the treatment selection.
The first equation \eqref{eq:y} models the outcome production through the potential outcome framework, and the second equation \eqref{eq:d} models the treatment selection via a threshold-crossing model.
The function $\tilde{\nu}$ in this threshold-crossing treatment assignment model \eqref{eq:d} is nonparametric and is unknown to the econometrician.

This model allows for endogeneity (unobserved confoundedness) in the sense that $(Y_0,Y_1)$ and $\tilde{U}$ may be statistically dependent even conditionally on $Z$.
For the purpose of identification, therefore, we require the vector $Z$ to consist of excluded exogenous variables (i.e., excluded instruments) as well as included exogenous variables, as formally stated in Assumption \ref{assn:MTEassumption} below.
The next assumption are standard in the recent literature on marginal treatment effects \citep[e.g.,][]{brinch/mogstad/wiswall:2017,mogstad/santos/torgovitsky:2017}.

\begin{assumption}[Model Restrictions]\label{assn:MTEassumption}
The random vector $Z$ can be written as $(Z_0',X')'$, where  
\begin{enumerate}
\item[(i)] 
$\tilde{U}$ and $Z_0$ are independent given $X$;
\item[(ii)] 
$E[Y_d\mid Z,\tilde{U}]=E[Y_d\mid X,\tilde{U}]$ and $E[Y_d^2]<\infty$; and 
\item[(iii)] 
$\tilde{U}$ is continuously distributed with a convex support conditional on $X$.
\end{enumerate}
\end{assumption}

Part (i) concerns solely about the treatment assignment model \eqref{eq:d}, and this is the only independence assumption to be imposed on the model, implying that we can allow for an arbitrary statistical dependence between the potential outcomes $(Y_0,Y_1)$ and $\tilde U$, even conditionally on $Z$.
Part (ii) states the exclusion restriction of the random sub-vector $Z_0$ of $Z$, and bounded second moments of the potential outcomes $(Y_0,Y_1)$.
Part (iii) rules out point masses and holes in the conditional distribution of $\tilde{U}$ given $X$.

For ease of analysis, by following the literature on the marginal treatment effects, we apply normalizing transformations, $U\equiv F_{\tilde{U}\mid X}(\tilde{U})$ and ${\nu}(Z)\equiv F_{\tilde{U}\mid X}(\tilde{\nu}(Z))$, in the threshold crossing model \eqref{eq:d}.
The following lemma confirms convenient properties to be used throughout the rest of the paper, as a result of these normalizing transformations under Assumption \ref{assn:MTEassumption}.

\begin{lemma}[Normalization]\label{lemma:normalization}
Suppose that Assumption \ref{assn:MTEassumption} (i) and (iii) hold. Then (i) $D=1\{\nu(Z)-U\geq 0\}$, and (ii) $U$ is distributed uniformly over $[0,1]$ conditional on $Z$. 
\end{lemma}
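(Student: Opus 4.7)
The plan is to establish both parts via the probability integral transform, using the fact that Assumption \ref{assn:MTEassumption}(iii) guarantees that $F_{\tilde U\mid X}(\,\cdot\mid x)$ is continuous and, because the conditional support is an interval, strictly increasing on that support for every $x$ in the support of $X$. These two properties of $F_{\tilde U\mid X}$ are what make the normalizations $U\equiv F_{\tilde U\mid X}(\tilde U)$ and $\nu(Z)\equiv F_{\tilde U\mid X}(\tilde\nu(Z))$ behave well.

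For part (i), I would start from the threshold-crossing equation \eqref{eq:d} and apply the non-decreasing map $F_{\tilde U\mid X}(\,\cdot\mid X)$ to both sides of the inequality $\tilde\nu(Z)\geq\tilde U$. Because $\tilde U$ lies in its conditional support almost surely and $F_{\tilde U\mid X}$ is strictly increasing on that support, the event $\{\tilde\nu(Z)\geq\tilde U\}$ coincides almost surely with $\{F_{\tilde U\mid X}(\tilde\nu(Z)\mid X)\geq F_{\tilde U\mid X}(\tilde U\mid X)\}$, which by definition of $\nu$ and $U$ is $\{\nu(Z)\geq U\}$. The only delicate point is to verify the equivalence when $\tilde\nu(Z)$ falls outside the conditional support of $\tilde U$, but there $F_{\tilde U\mid X}(\tilde\nu(Z)\mid X)$ equals $0$ or $1$ and the inequality is matched by the almost-sure location of $\tilde U$ in the support.

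For part (ii), I would invoke the probability integral transform: since $\tilde U\mid X$ is continuously distributed, $U=F_{\tilde U\mid X}(\tilde U\mid X)$ is uniform on $[0,1]$ conditional on $X$. It then remains to upgrade the conditioning from $X$ to $Z=(Z_0',X')'$. This step uses Assumption \ref{assn:MTEassumption}(i): since $\tilde U\perp Z_0\mid X$ and $U$ is a Borel function of $(\tilde U,X)$, we have $U\perp Z_0\mid X$, so the conditional distribution of $U$ given $Z$ equals the conditional distribution of $U$ given $X$, which is $\mathrm{Uniform}[0,1]$. The main obstacle, to the extent there is one, is the careful bookkeeping around strict monotonicity of $F_{\tilde U\mid X}$ in part (i); convex support together with continuity of the conditional distribution is precisely what rules out flat regions of $F_{\tilde U\mid X}$ and therefore guarantees the almost-sure equivalence of the two threshold-crossing representations.
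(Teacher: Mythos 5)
Your proposal is correct and follows essentially the same route as the paper's own proof: strict monotonicity of $F_{\tilde U\mid X}$ on its (convex) support to preserve the threshold-crossing event under the normalization, the probability integral transform for uniformity, and Assumption \ref{assn:MTEassumption}(i) to upgrade the conditioning from $X$ to $Z$. Your additional care about the case where $\tilde\nu(Z)$ falls outside the conditional support of $\tilde U$ is a detail the paper's proof glosses over, but it does not change the substance of the argument.
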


A proof of this lemma is provided in Appendix \ref{sec:lemma:normalization}.
Consequently, we can rewrite the threshold-crossing treatment selection model \eqref{eq:d} without loss of generality as
\begin{equation}\label{eq:d_wlog}
D=1\{\nu(Z)-U\geq 0\}
\mbox{ with } U|Z \sim \text{Uniform}(0,1).
\end{equation}
We will hereafter substitute the model model \eqref{eq:d_wlog} for the original model \eqref{eq:d} by virtue of Assumption \ref{assn:MTEassumption}.

\section{The main result}\label{sec:main_result}
In this section, we show that the social welfare function can be identified and represented via the marginal treatment effects as the operator kernel.
To this end, we first introduce and define the two key ingredients of this result, namely the social welfare function and the marginal treatment effects.

A policy maker assigns individuals with certain observed attributes $Z$ to a treatment $D=1$.
Thus, a treatment assignment rule is represented by a decision set $G \subset \mathcal{Z}$, where $\mathcal{Z}$ is a set of values that $Z$ may take.
Specifically, the decision set $G$ represents the policy in which individuals with $Z \in G$ are assigned to a treatment $D=1$ while those with $Z \not\in G$ are not.
Let $\mathcal{G}$ denote a collection all the decision sets $G$ under consideration subject to the policy makers' constraints.
With these notations, the social welfare function $W: \mathcal{G} \rightarrow \mathbb{R}$ is defined by
$$
W(G)=E[1\{Z\in G\}Y_1+1\{Z\notin G\}Y_0].
$$
Next, recall from Assumption \ref{assn:MTEassumption} that $X$ is the included sub-vector of the random vector $Z$ of exogenous variables that affects the treatment assignment, and also recall from Lemma \ref{lemma:normalization} or Equation \eqref{eq:d_wlog} that $U$ is the normalized unobserved factor of the treatment selection.
The marginal treatment effect \citep[MTE,][]{bjorklund/moffitt:1987} is defined by 
$$
MTE(u,x)=E[Y_1-Y_0\mid U=u,X=x]. 
$$
With these definitions of the social welfare function and the marginal treatment effect, we now state the following theorem as the main result of this paper.

\begin{theorem}[Representation]\label{theorem:MTErepresentationEWM}
Under Assumption \ref{assn:MTEassumption}, one has
\begin{equation}\label{eq:key_identification} 
W(G)=E[Y_0]+E\left[1\{Z\in G\}\int_0^1MTE(u,X)du\right]
\qquad\text{for every $G \in \mathcal{G}$.}
\end{equation}
\end{theorem}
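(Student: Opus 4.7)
The plan is to decompose $W(G)$ into a baseline and a contrast term, then rewrite the contrast as an integral against the MTE using the law of iterated expectations together with the exclusion restriction and the normalization lemma. First, I would add and subtract $Y_0$ inside the expectation to get
\begin{equation*}
W(G)=E[Y_0]+E\bigl[1\{Z\in G\}(Y_1-Y_0)\bigr],
\end{equation*}
so the task reduces to showing
\begin{equation*}
E\bigl[1\{Z\in G\}(Y_1-Y_0)\bigr]
= E\!\left[1\{Z\in G\}\int_0^1 MTE(u,X)\,du\right].
\end{equation*}

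Next I would condition on $(Z,U)$ inside the expectation and handle each potential outcome separately. For each $d\in\{0,1\}$, the goal is to argue that $E[Y_d\mid Z,U]=E[Y_d\mid X,U]$. The issue is that Assumption \ref{assn:MTEassumption}(ii) is stated in terms of $\tilde U$ rather than the normalized $U$, so this is where the most care is needed. The key observation is that, conditional on $X$, the map $\tilde u\mapsto F_{\tilde U\mid X}(\tilde u)$ is strictly increasing (by Assumption \ref{assn:MTEassumption}(iii), which rules out point masses), so the $\sigma$-algebras generated by $(Z,U)$ and by $(Z,\tilde U)$ coincide. Hence $E[Y_d\mid Z,U]=E[Y_d\mid Z,\tilde U]=E[Y_d\mid X,\tilde U]=E[Y_d\mid X,U]$, and taking the difference yields $E[Y_1-Y_0\mid Z,U]=MTE(U,X)$.

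Then I would iterate outward. Conditioning on $Z$ (which determines $X$) and using Lemma \ref{lemma:normalization}(ii), which gives $U\mid Z\sim\text{Uniform}(0,1)$, one obtains
\begin{equation*}
E[Y_1-Y_0\mid Z]=E\bigl[MTE(U,X)\bigm| Z\bigr]=\int_0^1 MTE(u,X)\,du.
\end{equation*}
Multiplying by the $Z$-measurable indicator $1\{Z\in G\}$ and taking expectations delivers the desired identity, which combined with the first display yields the theorem. Finiteness of all expectations is guaranteed by the $L^2$ bound in Assumption \ref{assn:MTEassumption}(ii), so Fubini's theorem and the various conditional expectations are well defined.

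The only nontrivial step is the passage from the exclusion restriction written in terms of $\tilde U$ to the analogous statement in terms of the normalized $U$; everything else is bookkeeping via iterated expectations and the uniform distribution of $U\mid Z$ supplied by Lemma \ref{lemma:normalization}.
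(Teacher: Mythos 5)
Your proposal is correct and follows essentially the same route as the paper's proof: the decomposition $W(G)=E[Y_0]+E[1\{Z\in G\}(Y_1-Y_0)]$, iterated expectations conditioning on $(Z,U)$, the exclusion restriction to obtain $MTE(U,X)$, and then Lemma \ref{lemma:normalization}(ii) to convert $E[MTE(U,X)\mid Z]$ into $\int_0^1 MTE(u,X)\,du$. The one place you go beyond the paper is in explicitly justifying why the exclusion restriction stated for $\tilde U$ transfers to the normalized $U$ via the equality of the $\sigma$-algebras generated by $(Z,U)$ and $(Z,\tilde U)$ under Assumption \ref{assn:MTEassumption}(iii); the paper invokes this step without comment, so your addition tightens rather than alters the argument.
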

\begin{proof}
Since $Y_1$ and $Y_0$ are integrable under Assumption \ref{assn:MTEassumption} (ii), we have
\begingroup
\allowdisplaybreaks
\begin{align*}
W(G)
=&
E[1\{Z\in G\}Y_1+1\{Z\notin G\}Y_0]\\
=&
E[1\{Z\in G\}(Y_1-Y_0)]+E[Y_0]. 
\end{align*}
\endgroup
Now, the statement of this theorem follows from 
\begingroup
\allowdisplaybreaks
\begin{align*}
E[1\{Z\in G\}(Y_1-Y_0)]
=&
E[1\{Z\in G\}E[Y_1-Y_0\mid Z,U]]\\
=&
E[1\{Z\in G\}MTE(U,X)]\\
=&
E[1\{Z\in G\}E[MTE(U,X)\mid Z]]\\
=&
E\left[1\{Z\in G\}\int_0^1MTE(u,X)du\right],
\end{align*}
\endgroup
where the first equality follows from the law of iterated expectations, the second equality follows from Assumption \ref{assn:MTEassumption} (ii) and the definition of $MTE$, the third equality follows from another application of the law of iterated expectations, and the fourth equality follows from Lemma \ref{lemma:normalization} under Assumption \ref{assn:MTEassumption} (i) and (iii). 
\end{proof}

The representation \eqref{eq:key_identification} of the mean social welfare via $MTE(u,x)$ is the key result of this paper. 
Because there is an existing literature on identification and estimation for the MTE \citep[e.g.,][]{heckman/vytlacil:2001,heckman/vytlacil:2005,heckman/vytlacil:2007,carneiro/lee:2009,carneiro/heckman/vytlacil:2010}, our result \eqref{eq:key_identification} paves the way for empirical welfare analysis under the potential endogeneity or unobserved confoundedness based on the existing identification and estimation methods of the MTE.  
We present a few examples of such applications in Sections \ref{sec:application_to_sdr} and \ref{sec:application_to_ewm}.

Finally, we remark on relations to and differences from \cite{kitagawa2018should}, who use the representation 
$$
W(G) = E[Y_0] + E\left[1\{Z \in G\} \tau(X)\right],
$$
where $\tau(x) \equiv E[Y_1-Y_0|X=x]$.
Our representation \eqref{eq:key_identification} is closely related to this representation.
Under the unconfoundedness assumption, \cite{kitagawa2018should} use the identification of $\tau(x)$ by $E[Y|D=1,X=x]-E[Y|D=0,X=x]$.
On the other hand, under the unobserved confoundedness in our setup, the corresponding operator kernel $\tau(x)$ is not identified by $E[Y|D=1,X=x]-E[Y|D=0,X=x]$ in general.
Instead, we propose to take advantage of the identification and estimation of $MTE$ from the literature on the marginal treatment effects.

\section{Applications to statistical decision rules}\label{sec:application_to_sdr}

Once we obtain the representation \eqref{eq:key_identification} of the mean social welfare via $MTE(u,x)$, we may apply it to a variety of policy makers' statistical decision problems for treatment choice. 
In this section, following \citet[Section 2.3]{hirano2020}, we introduce applications to the three popular statistical decision rules:
1. plug-in rules,
2. Bayes rules, and
3. empirical welfare maximization rules.
In Section \ref{sec:application_to_ewm}, we discuss the empirical welfare maximization rules in further details based on recent technologies. 

\subsection*{Plug-in rules}
Suppose that the distribution of $Z$ is parametrized by $\delta$ and we have an estimator $(\hat\delta,\widehat{MTE})$ for $(\delta,MTE)$. 
In this case, we can estimate the maximizer for the population social welfare by maximizing 
$$
\int 1\{z\in G\}\int_0^1\widehat{MTE}(u,x)dud\mu_{\hat\delta}(z)
$$
over $G\in\mathcal{G}$, where $\mu_{\delta}(\cdot)$ is the probability measure of $Z$ indexed by $\delta$ and we may ignore the term $E[Y_0]$ in \eqref{eq:key_identification} since it does not affect the maximization problem over $G\in\mathcal{G}$. 
Note that $(\hat\delta,\widehat{MTE})$ is an estimator, and so a maximizer of the above objective function is a statistical decision rule, i.e., it is a function of the observed data. 

\subsection*{Bayes rules}
Suppose that the distribution of $Z$ is parametrized by $\delta$ and $MTE$ is parametrized by $\eta$. 
In addition, suppose that we have a prior probability measure, denoted by $\pi_{\text{prior}}$, of $(\delta,\eta)$, and we can construct the posterior distribution, denoted by $\pi_{\text{posterior}}$, of $(\delta,\eta)$ by Bayesian updating. 
Given the posterior probability measure of $(\delta,\eta)$ and our representation \eqref{eq:key_identification}, we can construct the Bayes welfare
$$
\int \left(\int 1\{z\in G\}\int_0^1{MTE}_\eta(u,x)dud\mu_\delta(z)\right)d\pi_{\text{posterior}}(\delta,\eta).
$$
The Bayes rule is the maximizer of the Bayes welfare over $G \in \mathcal{G}$.  

\subsection*{Empirical welfare maximization rules}
The empirical welfare maximization rule uses the empirical distribution of $Z$ and an estimator $\widehat{MTE}$ for $MTE$. 
Namely, with a random sample $\{Z_1,\ldots,Z_n\}$ of size $n$, we can define the empirical welfare by 
$$
E_n\left[1\{Z\in G\}\int_0^1\widehat{MTE}(u,X)du\right],
$$
where $E_n$ denotes the sample average operator, i.e., $E_n f(Z) = n^{-1} \sum_{i=1}^n f(Z_i)$ for any measurable function $f$.
The empirical welfare  maximization rule selects the maximizer of this empirical welfare over $G \in \mathcal{G}$.
The following section presents more detailed analyses of the asymptotic properties of the maximum of this empirical welfare relative the population mean welfare under the oracle action.

\section{Applications to empirical welfare maximization}\label{sec:application_to_ewm}

We demonstrate applications of the representation result \eqref{eq:key_identification} to empirical welfare maximization in this section.
For the purpose of exposition of the core idea, we first focus on the case where the mapping $(u,x)\mapsto MTE(u,x)$ is known by a researcher in Section \ref{sec:known}.
We then present the case where the mapping $(u,x)\mapsto MTE(u,x)$ is unknown by a researcher and thus needs to be estimated in Section \ref{sec:unknown}.

\subsection{Empirical welfare maximization with known MTE}\label{sec:known}
In this section, we assume that we know the mapping $(u,x)\mapsto MTE(u,x)$. 
The empirical welfare maximizer in this setting is given by
$$
\hat{G}_{EWM}\in\arg\max_{G\in\mathcal{G}}E_n\left[1\{Z\in G\}\int_0^1MTE(u,X)du\right].
$$ 
We present a uniform asymptotic analysis of the maximum empirical welfare $W(\hat{G}_{EWM})$, relative to the population mean social welfare under the oracle action, denoted by
$$
W_{\mathcal{G}}=\sup_{G\in\mathcal{G}}W(G).
$$
To this end, consider the following assumption.

\begin{assumption}\label{assn:kitagawatetenov2.1}
(i) $|\int_0^1MTE(u,X)du|\leq\bar{M} < \infty$ a.s. for the class  $\mathcal{P}(\bar{M})$ of distributions of $(Y_0,Y_1,D,Z)$.
(ii) $\mathcal{G}$ has a finite VC-dimension $v<\infty$ and is countable. 
\end{assumption}

The above assumption is a modification of Assumption 2.1 (BO)-(VC)  in \cite{kitagawa2018should} tailored to our framework with the marginal treatment effects. 
Assumption \ref{assn:kitagawatetenov2.1} (i) requires a bounded integral of the marginal treatment effect function. 
As a sufficient condition, it holds when the outcome variable is bounded by some constant which is naturally satisfied in some applications. 
Assumption \ref{assn:kitagawatetenov2.1} (ii) restricts the complexity of the class of treatment functions $Z\mapsto 1\{Z\in G\}$. 
As a sufficient condition, when $X$ has a finite support, this assumption will automatically hold where $v$ is the cardinality of the power set for the support of $X$.  
\citet[p.598]{kitagawa2018should} collects a few examples of $\mathcal{G}$ with finite VC-dimensions.  

The following corollary provides a convergence rate of the worst case average welfare loss (regret) by the empirical welfare maximization.

\begin{corollary}\label{theorem:known}
Under Assumptions \ref{assn:MTEassumption} and \ref{assn:kitagawatetenov2.1}, one has
$$
\sup_{P\in\mathcal{P}(\bar{M})}E_{P^n}\left[W_{\mathcal{G}}-W(\hat{G}_{EWM})\right]\leq 2C_1\bar{M}\sqrt{\frac{v}{n}},
$$
where $C_1$ is a universal constant. 
\end{corollary}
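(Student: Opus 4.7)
The plan is to reduce the regret bound to a standard VC-class empirical process inequality, closely mirroring the strategy of \cite{kitagawa2018should} with $m(X) \equiv \int_0^1 MTE(u,X)\,du$ serving as the known kernel. By Theorem \ref{theorem:MTErepresentationEWM}, the constant $E[Y_0]$ cancels in any welfare difference, so defining $\tilde W(G) \equiv E[1\{Z\in G\}m(X)]$ and its empirical analogue $\tilde W_n(G) \equiv E_n[1\{Z\in G\}m(X)]$, we have $W(G)-W(G')=\tilde W(G)-\tilde W(G')$ for every $G,G'\in\mathcal{G}$, and $\hat G_{EWM}\in\arg\max_{G\in\mathcal{G}}\tilde W_n(G)$.

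First, I would carry out the familiar ``doubling'' step. For any $\varepsilon>0$, since $\mathcal{G}$ is countable, choose $G^\varepsilon \in \mathcal{G}$ with $\tilde W(G^\varepsilon) \geq \sup_{G\in\mathcal{G}}\tilde W(G) - \varepsilon$. Combining $\tilde W_n(\hat G_{EWM}) \geq \tilde W_n(G^\varepsilon)$ with the telescoping identity
\[
\tilde W(G^\varepsilon) - \tilde W(\hat G_{EWM}) = \bigl[\tilde W(G^\varepsilon)-\tilde W_n(G^\varepsilon)\bigr] + \bigl[\tilde W_n(G^\varepsilon)-\tilde W_n(\hat G_{EWM})\bigr] + \bigl[\tilde W_n(\hat G_{EWM})-\tilde W(\hat G_{EWM})\bigr]
\]
gives $\tilde W(G^\varepsilon)-\tilde W(\hat G_{EWM}) \leq 2\sup_{G\in\mathcal{G}}|\tilde W_n(G)-\tilde W(G)|$. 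Letting $\varepsilon\downarrow 0$ and taking $P^n$-expectations yields
\[
E_{P^n}\bigl[W_{\mathcal{G}} - W(\hat G_{EWM})\bigr] \leq 2\,E_{P^n}\Bigl[\sup_{G\in\mathcal{G}}|\tilde W_n(G)-\tilde W(G)|\Bigr].
\]

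Second, I would control the uniform empirical process by VC arguments. The function class $\mathcal{F} = \{z\mapsto 1\{z\in G\}m(x): G\in\mathcal{G}\}$ admits the constant envelope $\bar M$ by Assumption \ref{assn:kitagawatetenov2.1}(i), since $|m(X)|\leq \bar M$ almost surely. Multiplying the VC class $\{1\{\cdot\in G\}:G\in\mathcal{G}\}$ of VC dimension $v$ (Assumption \ref{assn:kitagawatetenov2.1}(ii)) by the fixed measurable function $m(x)$ preserves the VC-subgraph property with index controlled by a universal multiple of $v$, a standard stability fact for VC-subgraph classes. Symmetrization followed by Dudley's entropy integral---exactly the argument underlying Theorem 2.1 of \cite{kitagawa2018should}---then yields
\[
E_{P^n}\Bigl[\sup_{G\in\mathcal{G}}|\tilde W_n(G)-\tilde W(G)|\Bigr] \leq C_1 \bar M \sqrt{\tfrac{v}{n}}
\]
uniformly in $P\in\mathcal{P}(\bar M)$ for a universal constant $C_1$, which combined with the first step delivers the claim.

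The main obstacle is the second step. In \cite{kitagawa2018should} the integrand is a bounded outcome $Y$ times a VC indicator, whereas here it is the deterministic kernel $m(X)$ times a VC indicator. The crux is therefore verifying that multiplication of a VC indicator class by a fixed bounded function leaves the uniform entropy bound of the form $N(\epsilon \bar M,\mathcal{F},L_2(Q))\lesssim (1/\epsilon)^{Kv}$ intact; once this is in hand, the chaining bound applies verbatim and reproduces the $\bar M\sqrt{v/n}$ rate with no explicit dependence on $P$, which is what lets us pass the supremum over $P\in\mathcal{P}(\bar M)$ into the bound.
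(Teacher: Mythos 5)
Your proposal is correct and follows essentially the same route as the paper: the same two-step decomposition (the doubling inequality from the empirical-maximizer property, followed by a uniform bound on $\sup_{G\in\mathcal{G}}|\tilde W_n(G)-\tilde W(G)|$ over the VC-subgraph class $\{z\mapsto 1\{z\in G\}\int_0^1 MTE(u,x)\,du\}$ with envelope $\bar M$), where the paper simply invokes Lemma A.4 of \cite{kitagawa2018should} for the symmetrization-plus-entropy-integral step you spell out. Your $\varepsilon$-approximate-maximizer treatment of $W_{\mathcal{G}}$ is a slightly more careful version of the paper's ``for any $\tilde G\in\mathcal{G}$'' argument, but the substance is identical.
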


A proof is provided in Appendix \ref{sec:theorem:known}.
Corollary \ref{theorem:known} implies that no treatment assignment rule based on empirical data will achieve a minimax rate that is faster than $n^{-1/2}$, and shows that $\hat{G}_{EWM}$ is the minimax rate optimal over $\mathcal{P}(\bar{M})$.
This corollary extends and is a counterpart of Theorem 2.1 of \cite{kitagawa2018should}. 

\subsection{Empirical welfare maximization with unknown MTE}\label{sec:unknown}
In this section, we consider the case where the mapping $(u,x)\mapsto MTE(u,x)$ is unknown by a researcher and thus needs to be estimated from empirical data. 
The empirical welfare maximizer in this setting is given by
$$
\hat{G}_{hybrid}\in\arg\max_{G\in\mathcal{G}}E_n\left[1\{Z\in G\}\int_0^1\widehat{MTE}(u,X)du\right],
$$ 
where $\widehat{MTE}$ is an estimator of $MTE$. 
The existing literature on marginal treatment effects provides a list of alternative estimators $\widehat{MTE}$ of $MTE$.
We therefore first provide a general sufficient condition in terms of $\widehat{MTE}$ that accommodate a wide range of possible estimators in Section \ref{sec:sufficient_conditions}.
This will be followed up by a specific estimator $\widehat{MTE}$ with lower level primitive conditions tailored to it in Section \ref{sec:parametric_estimation}.

\subsubsection{A sufficient condition}\label{sec:sufficient_conditions}

A general high-level assumption for an estimator $\widehat{MTE}$ of $MTE$ is that it entails a uniform convergence rate $\psi_n^{-1}$ in the mean absolute value of $\int_0^1 \widehat{MTE}(u,X)du$ for $\int_0^1 MTE(u,X)du$, as formally stated below.

\begin{assumption}\label{assn:estimation_error}
For a class of data generating processes $\mathcal{P}_m$, there exists a sequence $\psi_n\rightarrow\infty$ such that 
$$
\limsup_{n\rightarrow\infty}\sup_{P\in\mathcal{P}_m}\psi_nE_{P^n}\left[E_n\left[\left|\int_0^1(\widehat{MTE}(u,X)-{MTE}(u,X))du\right|\right]\right]<\infty.
$$
\end{assumption}

This sufficient condition leads to the rate, $\psi_n^{-1} \vee n^{-1/2}$, of convergence for the worst case average welfare loss (regret), as formally stated as a corollary below.

\begin{corollary}\label{corollary:sufficient}
Under Assumptions \ref{assn:MTEassumption}, \ref{assn:kitagawatetenov2.1}, and \ref{assn:estimation_error},
$$
\sup_{P\in\mathcal{P}_m\cap\mathcal{P}(\bar{M})}E_{P^n}\left[W_{\mathcal{G}}-W(\hat{G}_{hybrid})\right]=O(\psi_n^{-1}\vee n^{-1/2}).
$$
\end{corollary}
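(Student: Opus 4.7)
My plan is to adapt the standard empirical risk minimization argument underlying Corollary \ref{theorem:known} by inserting the MTE estimation error as an additive nuisance term in the stochastic expansion.

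First, by Theorem \ref{theorem:MTErepresentationEWM}, the constant $E[Y_0]$ cancels in the regret, so $W_{\mathcal{G}} - W(\hat{G}_{hybrid}) = V_{\mathcal{G}} - V(\hat{G}_{hybrid})$ with $V(G) \equiv E[1\{Z \in G\}\int_0^1 MTE(u,X)du]$ and $V_{\mathcal{G}} \equiv \sup_{G \in \mathcal{G}} V(G)$. Let $\hat{V}_n(G) \equiv E_n[1\{Z \in G\}\int_0^1 \widehat{MTE}(u,X)du]$ denote the feasible empirical welfare maximized by $\hat{G}_{hybrid}$, and $\tilde{V}_n(G) \equiv E_n[1\{Z \in G\}\int_0^1 MTE(u,X)du]$ denote the infeasible empirical welfare that appears in Corollary \ref{theorem:known}. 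For any $G^* \in \mathcal{G}$, decomposing $V(G^*) - V(\hat{G}_{hybrid})$ around $\hat{V}_n$ and using $\hat{V}_n(\hat{G}_{hybrid}) \geq \hat{V}_n(G^*)$ yields $V(G^*) - V(\hat{G}_{hybrid}) \leq 2\sup_{G \in \mathcal{G}}|\hat{V}_n(G) - V(G)|$; taking the supremum over $G^*$ on the left gives
$$
V_{\mathcal{G}} - V(\hat{G}_{hybrid}) \;\leq\; 2 \sup_{G \in \mathcal{G}} |\hat{V}_n(G) - V(G)|.
$$

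Second, I would further split $|\hat{V}_n(G) - V(G)| \leq |\hat{V}_n(G) - \tilde{V}_n(G)| + |\tilde{V}_n(G) - V(G)|$ and bound each piece separately. The supremum of the first piece over $G \in \mathcal{G}$ is trivially bounded by $E_n\left[\left|\int_0^1(\widehat{MTE}(u,X)-MTE(u,X))du\right|\right]$ because $1\{Z \in G\} \in \{0,1\}$; taking $E_{P^n}$-expectations and invoking Assumption \ref{assn:estimation_error} controls this by $O(\psi_n^{-1})$ uniformly over $P \in \mathcal{P}_m$. For the second piece, under Assumption \ref{assn:kitagawatetenov2.1} the function class $\{z \mapsto 1\{z \in G\}\int_0^1 MTE(u,x)du : G \in \mathcal{G}\}$ has envelope $\bar{M}$ and VC-subgraph dimension controlled by $v$, so the standard symmetrization plus Dudley entropy argument, exactly as in the proof of Corollary \ref{theorem:known}, delivers the $O(\bar{M}\sqrt{v/n})$ rate uniformly over $P \in \mathcal{P}(\bar{M})$. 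Adding the two bounds yields the advertised rate $O(\psi_n^{-1} \vee n^{-1/2})$.

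The only real subtlety is bookkeeping: verifying that the two uniformity statements interlock on the intersection class $\mathcal{P}_m \cap \mathcal{P}(\bar{M})$, and noting that because $\widehat{MTE}$ enters multiplicatively against a bounded nonnegative indicator, the estimation error decouples from the empirical-process fluctuation without any cross-fitting or sample-splitting device. No cross-product term between the two sources of randomness must be analyzed, which is what makes the argument essentially a routine adaptation of the known-MTE case with a nuisance-estimation piece attached.
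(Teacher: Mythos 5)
Your proposal is correct and follows essentially the same route as the paper: both reduce the regret to (a) an MTE-estimation-error term bounded via Assumption \ref{assn:estimation_error} by $O(\psi_n^{-1})$ and (b) the uniform empirical-process deviation $\sup_{G}|\bar{W}_n(G)-\bar{W}(G)|$ bounded by $O(\bar M\sqrt{v/n})$ from the known-MTE case. The only difference is bookkeeping — you center the ERM inequality on the feasible criterion and then triangulate, whereas the paper telescopes \`a la Kitagawa--Tetenov Eq.\ (A.29) and keeps the difference of indicators — but the two decompositions yield the same two terms up to constants.
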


A proof of this corollary is provided in Appendix \ref{sec:corollary:sufficient}.
It serves as a counterpart of Theorem 2.5 of \cite{kitagawa2018should}.
Different estimators $\widehat{MTE}$ of $MTE$ in general entail different convergence rates $\psi_n^{-1}$.
We next present a concrete estimator with lower-level sufficient conditions for the high-level condition in Assumption \ref{assn:estimation_error}.

\subsubsection{Parametric estimation for $MTE(u,x)$}\label{sec:parametric_estimation}
By \cite{heckman/vytlacil:1999,heckman/vytlacil:2001,heckman/vytlacil:2005}, the MTE can be identified from data via the equality
$$
MTE(u,x)=\frac{E[Y\mid \nu(Z)=u,X=x]}{\partial u}\equiv LIV(u,x).
$$
In this section, we consider the parametric regression function for $E[Y\mid \nu(Z)=u,X=x]$ by 
$$
E[Y\mid \nu(Z)=u,X=x]=x'\beta_0+x'(\beta_1-\beta_0)u+\sum_{k=2}^K\alpha_ku^k
$$
following \citet[][Section 4.3]{cornelissen/dustmann/raute/schonberg:2016} on the survey of the marginal treatment effects for labor economists.
Let $\hat\nu(Z)$ denote some estimator of the propensity score $\nu(Z)$, and define 
\begingroup
\allowdisplaybreaks
\begin{align*}
{\mathcal{X}}&=((1-\nu(Z))X',\nu(Z)X',\nu(Z)^2,\ldots,\nu(Z)^K)',
\\
\hat{\mathcal{X}}&=((1-\hat\nu(Z))X',\hat\nu(Z)X',\hat\nu(Z)^2,\ldots,\hat\nu(Z)^K)',
\qquad\text{and}\\
\theta&=(\beta_0,\beta_1,\alpha_2,\ldots,\alpha_K).
\end{align*}
\endgroup

Let $\hat\theta=(\hat\beta_0,\hat\beta_1,\hat\alpha_2,\ldots,\hat\alpha_K)$ be the OLS estimator for $\theta$ by regressing $Y$ on $\hat{\mathcal{X}}$, that is, 
$$
\hat\theta=E_n\left[\hat{\mathcal{X}}\hat{\mathcal{X}}'\right]^{-1}E_n\left[\hat{\mathcal{X}}Y\right].
$$
Then, $MTE$ can be simply estimated by the following linear functional of $\hat\theta$.
$$
\widehat{MTE}(u,x)=x'(\hat\beta_1-\hat\beta_0)+\sum_{k=2}^Kk\hat\alpha_ku^{k-1}.
$$
Therefore, the operator kernel of in our representation \eqref{eq:key_identification} can be estimated by the simple linear expression
$$
\int_0^1\widehat{MTE}(u,x)du=x'(\hat\beta_1-\hat\beta_0)+\sum_{k=2}^K\hat\alpha_k.
$$
For this concrete estimator, we provide a set of lower-level conditions in the proposition below that guarantee the aforementioned high-level condition in Assumption \ref{assn:estimation_error} to be satisfied.

\begin{proposition}\label{theorem:estimation_error}
Let $C$ and $c$ be positive constants, and $\psi_n$ be a sequence with $\psi_n\geq n^{1/2}$. 
Suppose that the parameter space for $\theta$ is compact so that for sufficiently large $n$
\begin{equation}\label{eq:para_compact}
\|\hat\theta\|+\|\theta\|\leq C\mbox{ almost surely.} 
\end{equation}
Furthermore, suppose that $\mathcal{P}_m$ is a class of data generating processes such that 
\begingroup
\allowdisplaybreaks
\begin{align}\label{eq:P_hat_estimation}
\limsup_{n\rightarrow\infty}\sup_{P\in\mathcal{P}_m}\psi_nE_{P^n}\left[\max_{i=1,\ldots,n}|\hat\nu(Z_i)-\nu(Z_i)|^2\right]^{1/2}<\infty,&
\\
\label{eq:X_compact}
\max\{E\left[\|X\|^4\right],E\left[|Y|^4\right]\}<C,& \qquad\text{and}
\\
\label{eq:mini_eigen}
\lambda_{\min}\left(E\left[\mathcal{X}\mathcal{X}'\right]\right)\geq c.&
\end{align}
\endgroup
Then, Assumption \ref{assn:estimation_error} is satisfied.
\end{proposition}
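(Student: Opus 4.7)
I would first \emph{linearize} the target. Because
$$
\int_0^1\bigl(\widehat{MTE}(u,x)-MTE(u,x)\bigr)du \;=\; x'\bigl((\hat\beta_1-\beta_1)-(\hat\beta_0-\beta_0)\bigr) + \sum_{k=2}^K(\hat\alpha_k-\alpha_k)
$$
is linear in $\hat\theta-\theta$, its absolute value is dominated by $C'(1+\|X\|)\|\hat\theta-\theta\|$ for a constant $C'$ depending only on $\dim X$ and $K$. Applying Cauchy--Schwarz together with the fourth-moment bound (\ref{eq:X_compact}),
$$
\psi_n\,E_{P^n}\!\left[E_n\!\left[\left|\int_0^1(\widehat{MTE}-MTE)(u,X)du\right|\right]\right] \;\leq\; C''\,\psi_n\bigl(E_{P^n}\|\hat\theta-\theta\|^2\bigr)^{1/2},
$$
so it suffices to show $\sup_{P\in\mathcal{P}_m}\psi_n^2 E_{P^n}\|\hat\theta-\theta\|^2=O(1)$.

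\textbf{OLS decomposition.} Writing $Y=\mathcal{X}'\theta+\varepsilon$ with $E[\varepsilon\mid X,\nu(Z)]=0$ (so $E[\mathcal{X}\varepsilon]=0$), the normal equations give
$$
\hat\theta-\theta \;=\; E_n[\hat{\mathcal{X}}\hat{\mathcal{X}}']^{-1}\Bigl\{E_n[\hat{\mathcal{X}}(\mathcal{X}-\hat{\mathcal{X}})']\,\theta \;+\; E_n[\mathcal{X}\varepsilon] \;+\; E_n[(\hat{\mathcal{X}}-\mathcal{X})\varepsilon]\Bigr\}.
$$
I would first control the Gram matrix: $(E_n-E)[\mathcal{X}\mathcal{X}']$ is handled by Chebyshev using (\ref{eq:X_compact}), and $E_n[\hat{\mathcal{X}}\hat{\mathcal{X}}'-\mathcal{X}\mathcal{X}']$ by a mean-value expansion in the components $\hat\nu^k-\nu^k$ combined with (\ref{eq:P_hat_estimation}) and $E[\|X\|^4]\leq C$. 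With (\ref{eq:mini_eigen}), this yields $\lambda_{\min}(E_n[\hat{\mathcal{X}}\hat{\mathcal{X}}'])\geq c/2$ on an event $A_n$ whose complement has probability $o(\psi_n^{-2})$. On $A_n^c$, compactness (\ref{eq:para_compact}) gives $\|\hat\theta-\theta\|\leq 2C$, so the $\psi_n^2 P(A_n^c)$ contribution vanishes. On $A_n$ the inverse has operator norm at most $2/c$, leaving three pieces: (a) the bias $E_n[\hat{\mathcal{X}}(\mathcal{X}-\hat{\mathcal{X}})']\theta$, bounded by $\|\theta\|E_n[\|\hat{\mathcal{X}}\|\,\|\mathcal{X}-\hat{\mathcal{X}}\|]$ with $\|\mathcal{X}-\hat{\mathcal{X}}\|\leq \tilde C(1+\|X\|)|\hat\nu(Z)-\nu(Z)|$; (b) the noise $E_n[\mathcal{X}\varepsilon]$, a mean-zero sample average with $L^2$ rate $n^{-1/2}\leq \psi_n^{-1}$ by (\ref{eq:X_compact}); and (c) the cross term $E_n[(\hat{\mathcal{X}}-\mathcal{X})\varepsilon]$, dominated by $E_n[(1+\|X\|)|\varepsilon||\hat\nu(Z)-\nu(Z)|]$. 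For (a) and (c), Cauchy--Schwarz peels off a moment factor bounded by (\ref{eq:X_compact}) from $\max_i|\hat\nu(Z_i)-\nu(Z_i)|$, controlled uniformly in $P\in\mathcal{P}_m$ by (\ref{eq:P_hat_estimation}), yielding the $\psi_n^{-1}$ rate.

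\textbf{Main obstacle.} I expect the chief difficulty to be the generated-regressor cross term $E_n[(\hat{\mathcal{X}}-\mathcal{X})\varepsilon]$ uniformly in $P\in\mathcal{P}_m$: a naive Cauchy--Schwarz against a pointwise $L^2$ rate for $|\hat\nu-\nu|$ would only give $n^{-1/2}\psi_n^{-1}$, which does not suffice after multiplication by $\psi_n$. The \emph{maximum}-norm statement in (\ref{eq:P_hat_estimation}) is what makes the separation work, bounding $\max_i|\hat\nu(Z_i)-\nu(Z_i)|$ at rate $\psi_n^{-1}$ while the remaining $n^{-1}\sum_i|\varepsilon_i|(1+\|X_i\|)$ is controlled in $L^1$ by (\ref{eq:X_compact}). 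A secondary subtlety is that inverting the Gram matrix is nonlinear, so $L^2$ stability of $\hat\theta-\theta$ requires eliminating the instability event via the parameter compactness (\ref{eq:para_compact}); without that hypothesis, an $L^2$ bound on the numerator would not transfer to $\hat\theta-\theta$.
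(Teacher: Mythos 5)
Your overall route --- linearize the integrated MTE difference in $\hat\theta-\theta$, reduce to a convergence rate for $\hat\theta$, decompose via the normal equations, and use the maximum-norm condition \eqref{eq:P_hat_estimation} to handle the generated-regressor terms --- is the same as the paper's, and your ``main obstacle'' paragraph correctly identifies why the max-norm form of \eqref{eq:P_hat_estimation} is what makes the separation work. Two differences are worth noting. First, the paper never inverts the sample Gram matrix: it left-multiplies $\hat\theta-\theta$ by the population matrix $E[\mathcal{X}\mathcal{X}']$ and applies \eqref{eq:mini_eigen} directly, so no ``bad event'' $A_n$ is needed. Your version with $E_n[\hat{\mathcal{X}}\hat{\mathcal{X}}']^{-1}$ is workable thanks to \eqref{eq:para_compact}, but Chebyshev with the available fourth moments gives only $P(A_n^c)=O(n^{-1})=O(\psi_n^{-2})$, not the $o(\psi_n^{-2})$ you claim; $O(\cdot)$ suffices because the target is a finite $\limsup$, but you should not promise $o(\cdot)$.

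The genuine gap is your reduction to $\sup_{P\in\mathcal{P}_m}\psi_n^2E_{P^n}\|\hat\theta-\theta\|^2=O(1)$. That $L^2$ target cannot be verified under the stated hypotheses. The generated-regressor pieces (a) and (c) are bounded by products of the form $\max_i|\hat\nu(Z_i)-\nu(Z_i)|\cdot E_n[(1+\|X\|)\,|\varepsilon|]$; condition \eqref{eq:P_hat_estimation} controls the first factor only in $L^2(P^n)$ at rate $\psi_n^{-1}$, and \eqref{eq:X_compact} controls the second factor only in $L^2(P^n)$. Cauchy--Schwarz therefore bounds the $L^1(P^n)$ norm of such a product at rate $\psi_n^{-1}$ --- exactly the ``peeling'' you describe --- but the $L^2(P^n)$ norm of the product would require an $L^4$ bound on the max (or an almost-sure bound on the sample average), neither of which is assumed. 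So the term-by-term bounds you sketch deliver $E_{P^n}\|\hat\theta-\theta\|=O(\psi_n^{-1})$, not the $L^2$ rate your reduction demands. The fix is to drop the initial Cauchy--Schwarz over $P^n$ and work in $L^1$ throughout: write $E_n[\|X\|]=E[\|X\|]+(E_n-E)[\|X\|]$, use \eqref{eq:para_compact} to bound $\|\hat\theta-\theta\|$ almost surely on the cross term with $(E_n-E)[\|X\|]$ (which is $O(n^{-1/2})$ in $L^1$ by \eqref{eq:X_compact}), and then it suffices to show $\sup_{P\in\mathcal{P}_m}\psi_nE_{P^n}\|\hat\theta-\theta\|=O(1)$. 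This is precisely how the paper proceeds.
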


A proof is provided in Appendix \ref{sec:theorem:estimation_error}.
Except for \eqref{eq:P_hat_estimation}, all the conditions stated in Proposition \ref{theorem:estimation_error} can be considered as regularity conditions.  
The condition in \eqref{eq:P_hat_estimation} requires a convergence rate for an estimator $\hat\nu(z)$ of the propensity score $\nu(z)$ uniformly over the data generating processes. 
This condition can be checked with specific propensity score estimators.
For example, we can use the local polynomial estimator $\hat\nu(z)$ for $\nu(z)$, for which \citet[Appendix H]{kitagawa2018should} derive a uniform convergence rate. 
Specifically, the convergence in \eqref{eq:P_hat_estimation} follows directly from their Lemma E.4 (ii).
For another example, one could consider a linear propensity score model $\hat\nu(z) = p(z)' \hat\gamma$ and its least squares estimator $\hat\nu(z) = p(z)' \hat\gamma$.
In this case, \eqref{eq:P_hat_estimation} can be satisfied with $\psi_n = \sqrt{n}$, so that the convergence rate for the worst case average welfare loss (regret) in \eqref{corollary:sufficient} holds with the parametric root $n$ rate.


\section{Conclusion}\label{sec:concl}

An important research goal for empirical economists is to provide policy makers with guidance on how heterogeneous individuals can be assigned to a treatment under consideration based on evidence from empirical data. 
To this goal, it is essential to identify a social welfare function from observational data.
For many observational data sets used in empirical economic research, treatments are likely to be endogenously selected by rational agents.
Furthermore, the effects of these treatments are often heterogeneous even after controlling for observed attributes.
In this light, given the abilities of the marginal treatment effects to measure heterogeneous treatment effects, we propose the usage of the marginal treatment effects for identifying the mean social welfare function in the presence of unobserved heterogeneity in treatment effects while accounting for endogenous treatment selection in the empirical data.
Our main result, Theorem \ref{theorem:MTErepresentationEWM}, establishes that the mean social welfare can be represented via the marginal treatment effects as the operator kernel. 
We introduce applications of this main result to a few of policy makers' statistical decision problems, such as the plug-in rules, the Bayes rules, and in particular the empirical welfare maximization rule.
Focusing on the last application, we derive convergence rates of the worst case average welfare loss (regret) from the maximum empirical welfare under alternative scenarios.
The proposed representation in Theorem \ref{theorem:MTErepresentationEWM} can be beneficial as it allows the machinery developed in the existing literature on the marginal treatment effects to be directly applicable to the variety of empirical welfare analysis.

\bigskip

{\singlespacing
\bibliography{mybib}
}

\appendix
\section{Proofs}
\subsection{Proof of Lemma \ref{lemma:normalization}}\label{sec:lemma:normalization}
\begin{proof}
The first statement of this lemma follows because Assumption \ref{assn:MTEassumption} (iii) implies that $F_{\tilde{U}\mid X=x}$ is strictly increasing and therefore 
$$
D=1\{\tilde{\nu}(Z)-\tilde{U}\geq 0\}=1\{F_{\tilde{U}\mid X}(\tilde{\nu}(Z))-F_{\tilde{U}\mid X}(\tilde{U})\geq 0\}=1\{\nu(Z)-U\geq 0\}.
$$ 
The second statement follows because Assumption \ref{assn:MTEassumption} (i) and (iii) implies 
$$
P(U\leq u\mid Z)=P(F_{\tilde{U}\mid X}(\tilde{U})\leq u\mid Z)=P(\tilde{U}\leq F_{\tilde{U}\mid X}^{-1}(u)\mid Z)=P(\tilde{U}\leq F_{\tilde{U}\mid X}^{-1}(u)\mid X)=u.
$$
These complete a proof of the lemma.
\end{proof}

\subsection{Proof of Corollary \ref{theorem:known}}\label{sec:theorem:known}
\begin{proof}
Define the function $f$ by
$$
f(Z;G)=1\{Z\in G\}\int_0^1MTE(u,X)du.
$$
Let $\mathcal{F}=\{f(\ \cdot\ ;G): G\in\mathcal{G}\}$.
Then, $\mathcal{F}$ is a class of uniformly bounded functions with $\|f\|_{\infty}\leq\bar{M}$ for all $f\in\mathcal{F}$. 
From Assumption \ref{assn:kitagawatetenov2.1}, it follows that $\mathcal{F}$ is of a VC-subgraph class with VC-dimension $v<\infty$. 
By \citet[Lemma A.4]{kitagawa2018should},  
$$
E_{P^n}\left[\sup_{f\in\mathcal{F}}\left|E_n[f]-E_P[f]\right|\right]\leq C_1\bar{M}\sqrt{\frac{v}{n}},
$$
where $C_1$ is a universal constant defined in \citet[Lemma A.4]{kitagawa2018should}. 
Now, define 
\begingroup
\allowdisplaybreaks
\begin{align*}
\bar{W}(G)&=E\left[1\{Z\in G\}\int_0^1{MTE}(u,X)du\right]
\qquad\text{and}\\
\bar{W}_n(G)&=E_n\left[1\{Z\in G\}\int_0^1{MTE}(u,X)du\right].
\end{align*}
\endgroup
Then, we have 
\begin{align}
\sup_{P\in\mathcal{P}(\bar{M})}E_{P^n}\left[\sup_{{G}\in\mathcal{G}}|\bar{W}_n(G)-\bar{W}(G)|\right]
&=
\sup_{P\in\mathcal{P}(\bar{M})}E_{P^n}\left[\sup_{f\in\mathcal{F}}\left|E_n[f]-E_P[f]\right|\right]
\notag\\
&\leq 
C_1\bar{M}\sqrt{\frac{v}{n}}.
\label{eq:lemmaA4}
\end{align}
Following the derivations in \citet[Eq. (2.2)]{kitagawa2018should},
we have for any $\tilde{G}\in\mathcal{G}$ that
\begingroup
\allowdisplaybreaks
\begin{align}
W(\tilde{G})-W(\hat{G}_{EWM})
=&
\bar{W}(\tilde{G})-\bar{W}_n(\hat{G}_{EWM})+\bar{W}_n(\hat{G}_{EWM})-\bar{W}(\hat{G}_{EWM})
\notag\\
\leq&
\bar{W}(\tilde{G})-\bar{W}_n(\tilde{G})+\sup_{{G}\in\mathcal{G}}|\bar{W}_n(G)-\bar{W}(G)|
\notag\\
\leq&
2\sup_{{G}\in\mathcal{G}}|\bar{W}_n(G)-\bar{W}(G)|,
\label{eq:eq22}
\end{align}
\endgroup
where the first inequality uses $\bar{W}_n(\hat{G}_{EWM})\geq \bar{W}_n(\tilde{G})$. 
$$
\sup_{P\in\mathcal{P}(\bar{M})}E_{P^n}\left[W_{\mathcal{G}}-W(\hat{G}_{EWM})\right]\leq 2C_1\bar{M}\sqrt{\frac{v}{n}}
$$
follows from \eqref{eq:lemmaA4} and \eqref{eq:eq22}.
\end{proof}

\subsection{Proof of Corollary \ref{corollary:sufficient}}\label{sec:corollary:sufficient}
\begin{proof}
Define 
$$
\hat{\bar{W}}_n(G)=E_n\left[1\{Z\in G\}\int_0^1\widehat{MTE}(u,X)du\right].
$$
Following the derivations in \citet[Eq. (A.29)]{kitagawa2018should},
we obtain for any $\tilde{G}\in\mathcal{G}$ that
\begingroup
\allowdisplaybreaks
\begin{align}
W(\tilde{G})-W(\hat{G}_{hybrid})
=&
\bar{W}(\tilde{G})-\bar{W}(\hat{G}_{hybrid})\notag\\
=&
\bar{W}_n(\tilde{G})-\hat{\bar{W}}_n(\tilde{G})-\bar{W}_n(\hat{G}_{hybrid})+\hat{\bar{W}}_n(\hat{G}_{hybrid})\notag\\&+\bar{W}(\tilde{G})-\bar{W}_n(\tilde{G})+\bar{W}_n(\hat{G}_{hybrid})-\bar{W}(\hat{G}_{hybrid})\notag\\&+\hat{\bar{W}}_n(\tilde{G})-\hat{\bar{W}}_n(\hat{G}_{hybrid})
\notag\\
\leq&
\bar{W}_n(\tilde{G})-\hat{\bar{W}}_n(\tilde{G})-\bar{W}_n(\hat{G}_{hybrid})+\hat{\bar{W}}_n(\hat{G}_{hybrid})\notag\\&+\bar{W}(\tilde{G})-\bar{W}_n(\tilde{G})+\bar{W}_n(\hat{G}_{hybrid})-\bar{W}(\hat{G}_{hybrid})\notag\\
\leq&
E_n\left[(1\{Z\in \hat{G}_{hybrid}\}-1\{Z\in \tilde{G}\})\int_0^1(\widehat{MTE}(u,X)-{MTE}(u,X))du\right]\notag\\&+2\sup_{G\in\mathcal{G}}|\bar{W}_n(G)-\bar{W}(G)|,
\label{eq:eq_a29}
\end{align}
\endgroup
where the first inequality uses $\hat{\bar{W}}_n(\tilde{G})\leq\hat{\bar{W}}_n(\hat{G}_{hybrid})$. 
By Assumption \ref{assn:estimation_error},
we have 
\begin{equation}\label{eq:high_level_assumption}
\sup_{P\in\mathcal{P}_m\cap\mathcal{P}(\bar{M})}E_{P^n}\left[E_n\left[(1\{Z\in \hat{G}_{hybrid}\}-1\{Z\in \tilde{G}\})\int_0^1(\widehat{MTE}(u,X)-{MTE}(u,X))du\right]\right]=O(\psi_n^{-1}).
\end{equation}
By \eqref{eq:lemmaA4} in the proof of Theorem \ref{theorem:known}, 
we have 
\begin{equation}\label{eq:lemmaA4again}
\sup_{P\in\mathcal{P}_m\cap\mathcal{P}(\bar{M})}E_{P^n}\left[\sup_{G\in\mathcal{G}}|\bar{W}_n(G)-\bar{W}(G)|\right]=O(n^{-1/2}).
\end{equation}
The claim of the theorem now follows from \eqref{eq:eq_a29}, \eqref{eq:high_level_assumption}, and \eqref{eq:lemmaA4again}.
\end{proof}

\subsection{Proof of Proposition \ref{theorem:estimation_error}}\label{sec:theorem:estimation_error}
\begin{proof}
From on the structure of $MTE(p,x)$, we have  
\begingroup
\allowdisplaybreaks
\begin{eqnarray*}
&&
E_{P^n}\left[E_n\left[\left|\int_0^1(\widehat{MTE}(u,X)-{MTE}(u,X))du\right|\right]\right]\\
&=&
E_{P^n}\left[E_n\left[\left|X'(\hat\beta_1-\beta_1)-X'(\hat\beta_0-\beta_0)+\sum_{k=2}^K(\hat\alpha_k-\alpha_k)\right|\right]\right]\\
&\leq&
E_{P^n}\left[E_n\left[\|X\|\right](\|\hat\beta_1-\beta_1\|+\|\hat\beta_0-\beta_0\|)\right]+\sum_{k=2}^KE_{P^n}\left[\left|\hat\alpha_k-\alpha_k\right|\right]\\
&\leq&
E_{P^n}\left[(E_n-E)\left[\|X\|\right](\|\hat\beta_1-\beta_1\|+\|\hat\beta_0-\beta_0\|)\right]+E\left[\|X\|\right]E_{P^n}\left[(\|\hat\beta_1-\beta_1\|+\|\hat\beta_0-\beta_0\|)\right]\\&&+\sum_{k=2}^KE_{P^n}\left[\left|\hat\alpha_k-\alpha_k\right|\right].
\end{eqnarray*}
\endgroup
By \eqref{eq:para_compact} and \eqref{eq:X_compact}, it suffices to show 
$$
\limsup_{n\rightarrow\infty}\sup_{P\in\mathcal{P}_m}\psi_nE_{P^n}\left[\left\|\hat\theta-\theta\right\|\right]<\infty.
$$
Since $E_n\left[\hat{\mathcal{X}}(Y-\hat{\mathcal{X}}'\hat\theta)\right]=0$ and $E\left[{\mathcal{X}}(Y-{\mathcal{X}}'\theta)\right]=0$, we can write 
\begin{align*}
E\left[{\mathcal{X}}{\mathcal{X}}'\right] (\hat\theta-\theta)
&=
-(E_n-E)\left[{\mathcal{X}}{\mathcal{X}}'\right]\hat\theta+(E_n-E)\left[{\mathcal{X}}Y\right]\\&-E_n\left[\hat{\mathcal{X}}\hat{\mathcal{X}}'-{\mathcal{X}}{\mathcal{X}}'\right]\hat\theta+E_n\left[(\hat{\mathcal{X}}-{\mathcal{X}})Y\right].
\end{align*}
Therefore, 
\begingroup
\allowdisplaybreaks
\begin{align*}
\left\|\hat\theta-\theta\right\|
&\leq
\lambda_{\min}(E\left[{\mathcal{X}}{\mathcal{X}}'\right])^{-1}
\left\|(E_n-E)\left[{\mathcal{X}}{\mathcal{X}}'\right]\right\|_2\left\|\hat\theta\right\|
\\&
+\lambda_{\min}(E\left[{\mathcal{X}}{\mathcal{X}}'\right])^{-1}\left\|(E_n-E)\left[{\mathcal{X}}Y\right]\right\|
\\&
+\lambda_{\min}(E\left[{\mathcal{X}}{\mathcal{X}}'\right])^{-1}\left\|E_n\left[\hat{\mathcal{X}}\hat{\mathcal{X}}'-{\mathcal{X}}{\mathcal{X}}'\right]\right\|_2\left\|\hat\theta\right\|
\\&
+\lambda_{\min}(E\left[{\mathcal{X}}{\mathcal{X}}'\right])^{-1}\left\|E_n\left[(\hat{\mathcal{X}}-{\mathcal{X}})Y\right]\right\|.
\end{align*}
\endgroup
Therefore, the statement of this theorem follows from \eqref{eq:para_compact}, \eqref{eq:mini_eigen}, Lemma \ref{lemma:uniform_mean}, and Lemma \ref{lemma:hatmathcal{X}_conv}, where the last two lemmas are stated and proved in Appendix \ref{sec:auxiliary_lemmas_for_the_proof_of_proposition}.
\end{proof}

\section{Auxiliary lemmas for the proof of Proposition \ref{theorem:estimation_error}}\label{sec:auxiliary_lemmas_for_the_proof_of_proposition}
\begin{lemma}\label{lemma:uniform_mean}
Under the assumptions of Theorem \ref{theorem:estimation_error}, one has
\begingroup
\allowdisplaybreaks
\begin{align*}
\limsup_{n\rightarrow\infty}\sup_{P\in\mathcal{P}_m}n^{1/2}E_{P^n}\left[(E_n-E)\left[\|X\|\right]\right]<&\infty,
\\
\limsup_{n\rightarrow\infty}\sup_{P\in\mathcal{P}_m}n^{1/2}E_{P^n}\left[\left\|(E_n-E)\left[{\mathcal{X}}Y\right]\right\|\right]<&\infty,
\\
\limsup_{n\rightarrow\infty}\sup_{P\in\mathcal{P}_m}n^{1/2}E_{P^n}\left[\left\|(E_n-E)\left[{\mathcal{X}}{\mathcal{X}}'\right]\right\|_2\right]<&\infty,
\\
\limsup_{n\rightarrow\infty}\sup_{P\in\mathcal{P}_m}n^{1/2}E_{P^n}\left[\left|(E_n-E)\left[\|X\|^2\right]\right|\right]<&\infty,
\qquad\text{and}\\
\limsup_{n\rightarrow\infty}\sup_{P\in\mathcal{P}_m}n^{1/2}E_{P^n}\left[\left|(E_n-E)\left[|Y|^2\right]\right|\right]<&\infty.
\end{align*}
\endgroup
\end{lemma}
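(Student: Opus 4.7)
The plan is to treat all five displays uniformly. Each one asserts that a first absolute moment of a centered sample average is $O(n^{-1/2})$ uniformly over $P \in \mathcal{P}_m$, so the common mechanism is Jensen's inequality together with the fact that a sample mean of $n$ i.i.d.\ terms has variance equal to $n^{-1}$ times the variance of a single draw. (I read the first display as carrying an absolute value around $(E_n-E)[\|X\|]$, consistent with items 4 and 5; otherwise the quantity is identically zero.)

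First I would reduce each display to a second-moment bound. For any scalar $W$ with $E_P[W^2] < \infty$, Jensen gives
$$E_{P^n}\bigl[|(E_n-E)[W]|\bigr] \leq \sqrt{E_{P^n}\bigl[((E_n-E)[W])^2\bigr]} = \sqrt{\mathrm{Var}_P(W)/n} \leq \sqrt{E_P[W^2]/n}.$$
For a vector $W$ the same argument applied to each coordinate yields $E_{P^n}[\|(E_n-E)[W]\|] \leq \sqrt{E_P[\|W\|^2]/n}$. For the matrix display I would dominate the spectral norm by the Frobenius norm and use the identity $\sum_{j,k}\mathcal{X}_j^2\mathcal{X}_k^2 = \|\mathcal{X}\|^4$, yielding $E_{P^n}[\|(E_n-E)[\mathcal{X}\mathcal{X}']\|_2] \leq \sqrt{E_P[\|\mathcal{X}\|^4]/n}$.

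Second I would collect the necessary moment bounds uniformly in $P \in \mathcal{P}_m$. The statements for $\|X\|^2$ and $|Y|^2$ follow immediately from $E_P[\|X\|^4] \leq C$ and $E_P[Y^4] \leq C$ in \eqref{eq:X_compact}, and the statement for $\|X\|$ from $E_P[\|X\|^2] \leq \sqrt{E_P[\|X\|^4]}$ via Cauchy--Schwarz. The only nontrivial ingredient is a moment bound on $\mathcal{X}$. Because $\nu(Z)\in[0,1]$ and $(1-\nu(Z))^2+\nu(Z)^2 \leq 1$, the definition of $\mathcal{X}$ gives the pointwise bound
$$\|\mathcal{X}\|^2 \leq \|X\|^2 + (K-1),$$
hence $\|\mathcal{X}\|^4 \leq 2\|X\|^4 + 2(K-1)^2$, uniformly integrable under \eqref{eq:X_compact}. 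A final Cauchy--Schwarz step gives $E_P[\|\mathcal{X}Y\|^2] = E_P[\|\mathcal{X}\|^2 Y^2] \leq \sqrt{E_P[\|\mathcal{X}\|^4]\,E_P[Y^4]}$, which closes the remaining display.

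There is no real obstacle here: the whole lemma collapses to the standard observation that $E|\bar S_n - E\bar S_n| \leq \sqrt{E[W^2]/n}$ combined with the fourth-moment hypothesis on $X$ and $Y$. The one step requiring any thought is the pointwise inequality $\|\mathcal{X}\|^2 \leq \|X\|^2 + (K-1)$, which is an immediate consequence of $\nu(Z) \in [0,1]$, and uniformity over $\mathcal{P}_m$ is automatic because the only distributional inputs used are the bounds on $E_P[\|X\|^4]$ and $E_P[Y^4]$.
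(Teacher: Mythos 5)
Your proposal is correct and fills in exactly the argument the paper's one-line proof gestures at: bound the first absolute moment by the second moment via Jensen, use that the variance of an i.i.d.\ sample mean is $n^{-1}$ times the single-draw variance, and verify the required second moments uniformly over $\mathcal{P}_m$ from the fourth-moment condition \eqref{eq:X_compact} together with the pointwise bound $\|\mathcal{X}\|^2\leq\|X\|^2+(K-1)$ implied by $\nu(Z)\in[0,1]$. Your reading of the first display as carrying an absolute value (without which it is identically zero) is also the sensible interpretation.
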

\begin{proof}
The statements follow from evaluating the second moment for $E_n-E$ for each random variable. The second moments for these variables are bounded uniformly over $P\in\mathcal{P}_m$ by \eqref{eq:X_compact}.
\end{proof}

\begin{lemma}\label{lemma:hatmathcal{X}_conv}
Under the assumptions of Theorem \ref{theorem:estimation_error}, one has
\begingroup
\allowdisplaybreaks
\begin{align*}
\limsup_{n\rightarrow\infty}\sup_{P\in\mathcal{P}_m}\psi_nE_{P^n}\left[\left\|E_n\left[\hat{\mathcal{X}}\hat{\mathcal{X}}'-{\mathcal{X}}{\mathcal{X}}'\right]\right\|_2\right]<&\infty
\qquad\text{and}\\
\limsup_{n\rightarrow\infty}\sup_{P\in\mathcal{P}_m}\psi_nE_{P^n}\left[\left\|E_n\left[(\hat{\mathcal{X}}-{\mathcal{X}})Y\right]\right\|\right]<&\infty.
\end{align*}
\endgroup
\end{lemma}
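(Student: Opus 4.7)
My plan is to reduce both statements to the high-level bound in \eqref{eq:P_hat_estimation} via a pointwise estimate on $\hat{\mathcal{X}} - \mathcal{X}$. The key preliminary step is to show
$$
\|\hat{\mathcal{X}} - \mathcal{X}\| \leq C_K\, |\hat\nu(Z) - \nu(Z)|\, (1 + \|X\|)
$$
for some constant $C_K$ depending only on $K$. This is elementary: the components of $\hat{\mathcal{X}} - \mathcal{X}$ are $(\nu - \hat\nu)X$, $(\hat\nu - \nu)X$, and $\hat\nu^k - \nu^k$ for $k = 2, \ldots, K$. The polynomial identity $\hat\nu^k - \nu^k = (\hat\nu - \nu) \sum_{j=0}^{k-1} \hat\nu^j \nu^{k-1-j}$ together with $\nu \in [0,1]$ (as a propensity score) and boundedness of $\hat\nu$ yields the claimed linear control. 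The same reasoning gives $\|\mathcal{X}\|, \|\hat{\mathcal{X}}\| \leq C_K (1 + \|X\|)$.

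For the first statement, I would write $\hat{\mathcal{X}}\hat{\mathcal{X}}' - \mathcal{X}\mathcal{X}' = (\hat{\mathcal{X}} - \mathcal{X})\hat{\mathcal{X}}' + \mathcal{X}(\hat{\mathcal{X}} - \mathcal{X})'$, so that
$$
\|\hat{\mathcal{X}}\hat{\mathcal{X}}' - \mathcal{X}\mathcal{X}'\|_2 \leq C\, |\hat\nu(Z) - \nu(Z)|\, (1 + \|X\|)^2
$$
pointwise. Pulling out the crude bound $|\hat\nu(Z_i) - \nu(Z_i)| \leq \max_j |\hat\nu(Z_j) - \nu(Z_j)|$ from the sample mean decouples the estimation noise from the design variables:
$$
\|E_n[\hat{\mathcal{X}}\hat{\mathcal{X}}' - \mathcal{X}\mathcal{X}']\|_2 \leq C \max_i |\hat\nu(Z_i) - \nu(Z_i)| \cdot E_n[(1 + \|X\|)^2].
$$
Cauchy--Schwarz on the outer expectation then isolates $(E_{P^n}[\max_i |\hat\nu(Z_i) - \nu(Z_i)|^2])^{1/2}$, which is $O(\psi_n^{-1})$ by \eqref{eq:P_hat_estimation}, multiplied by $(E_{P^n}[(E_n[(1+\|X\|)^2])^2])^{1/2}$, which by Jensen on the sample mean is bounded by $\sqrt{E[(1+\|X\|)^4]}$, hence $O(1)$ by \eqref{eq:X_compact}. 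Multiplying through by $\psi_n$ gives the first bound.

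For the second statement, the same strategy gives
$$
\|E_n[(\hat{\mathcal{X}} - \mathcal{X})Y]\| \leq C \max_i |\hat\nu(Z_i) - \nu(Z_i)| \cdot E_n[(1 + \|X\|)\,|Y|],
$$
and the outer Cauchy--Schwarz reduces matters to controlling $E[(1+\|X\|)^2 Y^2]$, which by another Cauchy--Schwarz splits into $\sqrt{E[(1 + \|X\|)^4]}\cdot\sqrt{E[Y^4]}$, both uniformly bounded by \eqref{eq:X_compact}. This again yields the $\psi_n^{-1}$ rate.

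The main obstacle, and the place where I would have to be careful in the write-up, is justifying the $K$-dependent (but $n$- and $P$-independent) constant in the pointwise bound on $\|\hat{\mathcal{X}} - \mathcal{X}\|$. This requires a uniform bound on $\hat\nu$, which I would derive from the compactness in \eqref{eq:para_compact} (for the parametric example) or from the estimator-specific uniform bounds cited after Proposition \ref{theorem:estimation_error} (for the local polynomial case). Should such a uniform bound be unavailable, one can substitute $|\hat\nu|^j \leq 2^{j-1}(|\nu|^j + |\hat\nu - \nu|^j)$, producing extra terms of the form $\max_i |\hat\nu(Z_i) - \nu(Z_i)|^{1+j}$ whose expected size is of higher order in $\psi_n^{-1}$ under \eqref{eq:P_hat_estimation} and therefore absorbed. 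Everything else is a routine combination of Cauchy--Schwarz and Jensen.
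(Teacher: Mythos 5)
Your proposal is correct and follows essentially the same route as the paper: a pointwise Lipschitz bound of the form $\|\hat{\mathcal{X}}\hat{\mathcal{X}}'-\mathcal{X}\mathcal{X}'\|_2\leq C(1+\|X\|)^2|\hat\nu(Z)-\nu(Z)|$ and $\|(\hat{\mathcal{X}}-\mathcal{X})Y\|\leq C(1+\|X\|)|Y|\,|\hat\nu(Z)-\nu(Z)|$, followed by pulling out $\max_i|\hat\nu(Z_i)-\nu(Z_i)|$, Cauchy--Schwarz against \eqref{eq:P_hat_estimation}, and the fourth-moment bound \eqref{eq:X_compact}. The only cosmetic differences are that the paper channels the sample-moment control through Lemma \ref{lemma:uniform_mean} where you use Jensen directly, and that you are more explicit than the paper about needing a uniform bound on $\hat\nu$ (which the paper's constants $C_2$ and $2\|X\|+K-1$ tacitly assume).
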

\begin{proof} 
Since 
\begingroup
\allowdisplaybreaks
\begin{align*}
{\mathcal{X}}{\mathcal{X}}'=&((1-\nu(Z))X',\nu(Z)X',\nu(Z)^2,\ldots,\nu(Z)^K)'((1-\nu(Z))X',\nu(Z)X',\nu(Z)^2,\ldots,\nu(Z)^K)
\qquad\text{and}\\
\hat{\mathcal{X}}\hat{\mathcal{X}}'=&((1-\hat\nu(Z))X',\hat\nu(Z)X',\hat\nu(Z)^2,\ldots,\hat\nu(Z)^K)'((1-\hat\nu(Z))X',\hat\nu(Z)X',\hat\nu(Z)^2,\ldots,\hat\nu(Z)^K),
\end{align*}
\endgroup
we have  
$$
\|\hat{\mathcal{X}}\hat{\mathcal{X}}'-\mathcal{X}\mathcal{X}'\|_2\leq C_2(\|X\|^2+1)|\hat\nu(Z)-\nu(Z)|
$$ 
for a positive constant $C_2 < \infty$.
Moreover, we have 
\begingroup
\allowdisplaybreaks
\begin{eqnarray*}
&&\left\|(\hat{\mathcal{X}}-{\mathcal{X}})Y\right\|
\\
&=&
\left\|(-(\hat\nu(Z)-\nu(Z))YX',(\hat\nu(Z)-\nu(Z))YX',(\hat\nu(Z)^2-\nu(Z)^2)Y,\ldots,(\hat\nu(Z)^K-\nu(Z)^K)Y)'\right\|\\
&\leq&
|\hat\nu(Z)-\nu(Z)|(2\|X\|+K-1)|Y|.
\end{eqnarray*}
\endgroup
By \eqref{eq:P_hat_estimation} and Lemma \ref{lemma:uniform_mean}, the statement of this lemma holds.
\end{proof}

\end{document}